\documentclass[10pt, conference, letterpaper]{IEEEtran}%
\usepackage{amsmath}
\usepackage{bbm}
\usepackage{graphicx}
\usepackage{epsfig}
\usepackage{array}
\usepackage{psfrag}
\usepackage{amssymb}
\usepackage{mathdots}
\usepackage{color}
\usepackage{pstricks,pst-node,pst-text,pst-3d,pst-plot}
\usepackage{psfrag}
\usepackage{enumerate}
\usepackage{url,cite}
\usepackage{amsfonts}%
\usepackage{amsthm}
\usepackage{dsfont}%
\usepackage{verbatim}%
\usepackage{setspace}
\usepackage{float}
\usepackage{url}
\usepackage{fancyhdr}
\usepackage{bm}
%\usepackage{newtxmath}
%\usepackage{newtxtext}
%\usepackage{lipsum} %\footnote
%\usepackage[yyyymmdd,hhmmss]{datetime}
%\rfoot{Compiled on \today}
%\usepackage[toc,page]{appendix}
%\usepackage{fullpage}
%\doublespacing
%\usepackage[hmargin=1in , vmargin=1.3in]{geometry}
\usepackage{algorithm}
\usepackage{algorithmic}
\pdfoptionpdfminorversion=6
\newtheorem{thm}{Theorem}
\newtheorem{lma}{Lemma}
\newtheorem{Def}{Definition}
\DeclareMathOperator{\E}{\mathbb{E}}

\newcommand{\lb}{\left (}
\newcommand{\rb}{\right )}

%Constants:
\newcommand{\script}[1]{{\mathcal {#1}}}

\newcommand{\Pmax}{P_{\rm max}}

\newcommand{\Iinst}{I_{\rm inst}}

\newcommand{\fgammai}{f_{\gamma_i}}
\newcommand{\fgi}{f_{g_i}}
\newcommand{\EE}[1]{\E \left[ #1 \right]}
\newcommand{\EEC}[1]{\E_{\bfY(k)} \left[ #1 \right]}
\newcommand{\bgi}{\bar{g}_i}
\newcommand{\bW}{\overline{W}}

\newcommand{\bfP}{{\bf P}}

\newcommand{\bfpi}{{\bm{\pi}}}

\newcommand{\bgamma}{\bar{\gamma}}

\newcommand{\Yivq}{\{Y_i(k)\}_{k=0}^\infty}

\newcommand{\bfY}{{\bf Y}}

\newcommand{\pardef}[1]{\triangleq [#1_1^{(t)},\cdots,#1_N^{(t)}]^T}
\newcommand{\parFdef}[1]{\triangleq [#1_1(k),\cdots,#1_N(k)]^T}
\newcommand{\DOIC}{\emph{DOIC}}

\newcommand{\brho}{\bar{\rho}}
\newcommand{\Ri}{R_i^{(t)}}
\newcommand{\FDurK}{T_k}

\newcommand{\Ts}{T}

\newcommand{\SU}[1]{SU_{#1}}
\newcommand{\Rmax}{R_{\rm max}}
\newcommand{\gammamax}{\gamma_{\rm max}}
\newcommand{\Pit}{P_i^{(t)}}
\newcommand{\git}{g_i^{(t)}}
\newcommand{\HOL}{L^{\rm rem}}
\newcommand{\gmax}{g_{\rm max}}
\newcommand{\dLow}{45}
\newcommand{\dHigh}{60}
\newcommand{\sij}{s_{(j)}}

%Equations:

% Defining \lambda's

%%%%%%%%%% Overlay parameters:

%%%%%%%%%% SOFT Underlay parameters:
% lambda:

% Power

% \gammaS

% U, S, p and I:

\begin{document}
\title{Delay-Optimal Scheduling and Power Control for Instantaneous-Interference-Limited CRs}
\author{Ahmed Ewaisha, Cihan Tepedelenlio\u{g}lu\\
\small{School of Electrical, Computer, and Energy Engineering, Arizona State University, USA.}\\
\small{Email:\{ewaisha, cihan\}@asu.edu}\\
%\small{August 2015}
%\thanks{}
%\today
}
\maketitle
\begin{abstract}
We study an uplink multi secondary user (SU) cognitive radio system suffering statistical heterogeneity among SUs' channels. This heterogeneity may result in differentiated delay performances to these SUs and result in harmful interference to the PU. We first derive an explicit closed-form expression for the average delay in terms of an arbitrary power-control policy. Then, we propose a delay-optimal closed-form scheduling and power-control policy that can provide the required average delay guarantees to all SUs besides protecting the PU from harmful interference. We support our findings by extensive system simulations and show that it outperforms existing policies substantially.
\end{abstract}

%\textbf{Keywords:}
%\newpage
%==================================================================================================
\section{Introduction}

The problem of scarcity in the spectrum band has led to a wide interest in cognitive radio (CR) networks. CRs refer to devices that coexist with the licensed spectrum owners called the primary users (PUs). CR users, also referred to as the secondary users (SUs), located physically close to the PUs might suffer larger degradation in their QoS compared to those that are far because closer SUs transmit with smaller amounts of power. This problem does not appear in conventional non CR cellular systems since frequency channels tend to be orthogonal in non CR systems.

The problem of scheduling and/or power control for CR systems has been widely studied in the literature (please see \cite{Letaief_PU_Known_Location,NEP_Distributed,Ewaisha_TVT2015,Iter_Bit_Allocation_OFDM,6464638}, and references therein). %An uplink CR system is considered in \cite{Letaief_PU_Known_Location} where the authors propose a scheduling policy that minimizes the interference to the PU under the assumption that all users' locations are known to the secondary base station, including the PU's. In \cite{NEP_Distributed} a scheduling policy was proposed based on game theory. The policy can be implemented distributively but depends on an on-off rate adaptation scheme. In \cite{Ewaisha_TVT2015} a closed-form water-filling-like power control policy was proposed to maximize the CR system's per-user throughput. Although multiple SUs were considered, the scheduling scheme neglects the SUs' queue information.
The policies proposed in these works aim at optimizing the throughput for the SUs and, at the same time, protecting the PUs from interference. However, providing guarantees on the queuing delay in CR systems was not the goal of these works. In real-time applications, such as audio/video conference calls, packets are expected to arrive at the destination before a prespecified deadline. Thus, the average packet delay needs to be as small as possible to prevent jitter and to guarantee acceptable QoS for these applications \cite{shakkottai2002scheduling,kang2013performance}.

Queuing delay has gained strong attention recently in the literature and scheduling algorithms have been proposed to guarantee small delay \cite{neely2003power,Two_Q_Light_Hvy,li2011delay}. A power control and routing policy is proposed in \cite{neely2003power} to maximize the capacity region under an instantaneous power constraint. While the authors show an upper bound on the average delay, this delay performance is not guaranteed to be optimal. Reference \cite{li2011delay}, which is the closest to our work, studies the joint scheduling-and-power-control problem. It assumes that all users process packets with the same power since it discusses the problem of processing jobs at a CPU. The CPU problem is a special case of the wireless channel problem herein where the CPU processes jobs at a deterministic rate. The authors assume priority scheduling and depend on the closed-form expressions for the average delay. Up to our knowledge, closed-form expressions for the average packet delay do not exist in the literature for the random rate case.

%The problem is formulated in continuous time where the packet service time follows a continuous time distribution that is easier to analyze than discrete ones. In wireless settings, the fading coherence time provides a naturally discrete/slotted framework which brings with it its own combinatorial technical challenges. 

%Although queuing theory, originally developed to model packets at a server, can be applied to wireless channels, the challenges are different. One of the main challenges is the fading nature of the wireless channel that changes from a slot to another. Fading requires adapting the user's power and/or rate according to the channel's fading coefficient. The idea of power and/or rate adaptation based on the channel condition does not have an analogy in server problems and, thus, is absent in the aforementioned references. Instead, existing works treat wireless channels as on-off fading channels and do not consider multiple fading levels. Among the relevant references that consider a more general fading channel model are \cite{neely2003power}, which was discussed above, \cite{Fading_No_Scheduling,E_Hossain_CR_Delay_Analysis} where the optimization over the scheduling policy was out of the scope of their work, and \cite{Min_Pow_4_Delay_NonCR} that neglects the interference constraint since it considers a non CR system.

%In contrast with \cite{Letaief_PU_Known_Location,NEP_Distributed,Ewaisha_TVT2015,Iter_Bit_Allocation_OFDM,6464638} that do not optimize the queuing delay,
The authors of \cite{ewaisha2015dynamic} propose a scheduling policy to minimize the sum of SUs' average delays. However, power control was out of their scope. In CRs, power control dictates adhering to PU's, instantaneous or average, interference constraints. In this paper, we extend the work in \cite{ewaisha2015dynamic} to study the problem under instantaneous interference constraints. For the average interference case, the reader is referred to \cite{Ewaisha_Globecom16} where we show that the solution to the problem has a totally different structure than the instantaneous case. The contributions in this paper are: i) The derivation of a closed-form expression for the average delay as a function of the power control policy, ii) finding a closed-form expression for the optimal power control policy, iii) proposing a delay-optimal joint power-control-and-scheduling policy that includes a closed-form expression for the power control policy in a CR system. %Moreover, the complexity of the proposed policy is polynomial in time since it is equivalent to that of sorting a vector of length $N$, where $N$ is the number of SUs in the system.% We show that conventional existing algorithms as the max-weight scheduling policy, if applied directly, can degrade the quality of service of both SUs as well as the PUs.

%Despite this possible degradation in performance, SUs might or might not suffer depending on the type of data being transmitted. For example, assuming the QoS metric to be the average waiting time in the queue, SUs might still not suffer even with this degradation in performance if they were transmitting delay-insensitive data. On the other hand, if the data was delay-sensitive, then the need of algorithms that guarantees the SUs' delay performance is inevitable.

% We present to different scenarios: 1) Instantaneous interference constraint, and 2) Average interference constraint. In the former, the PUs' interference constraint is such that the instantaneous interference does not exceed a prespecified threshold, while in the latter, the interference, averaged over an infinite horizon, is bounded by some constant.

The rest of the paper is organized as follows. The system model and the underlying assumptions are presented in Section \ref{Model}. In Section \ref{Prob_Statement} we formulate the problem mathematically. The proposed policy and its optimality are presented in Section \ref{Proposed_Algorithm},
%The average interference constraint scenario is solved in Section \ref{Avg_Interf_Constr} 
followed by the extensive simulation results in Section \ref{Results}. Finally the paper is concluded in Section \ref{Conclusion}.

\section{System Model}
\label{Model}
We assume a CR system consisting of a single secondary base station (BS) serving $N$ secondary users (SUs) indexed by the set $\script{N}\triangleq \{1,\cdots N\}$ (Fig. \ref{Cell_Fig}). We are considering the uplink phase where each SU has its own buffer for packets that need to be sent to the BS. The SUs share a single frequency channel with a single PU that has licensed access to this channel. The CR system operates in an underlay fashion where the PU is using the channel continuously at all times. SUs are allowed to transmit as long as they do not cause harmful interference to the PU. We assume an instantaneous interference constraint where the interference received by the PU at any given slot should not exceed a prespecified threshold $\Iinst$.% Moreover, in order for the secondary BS to be able to decode the received signal, no more than one SU at a time slot is to be assigned the channel for transmission.

\begin{figure}%
\centering
\includegraphics[width=0.7\columnwidth]{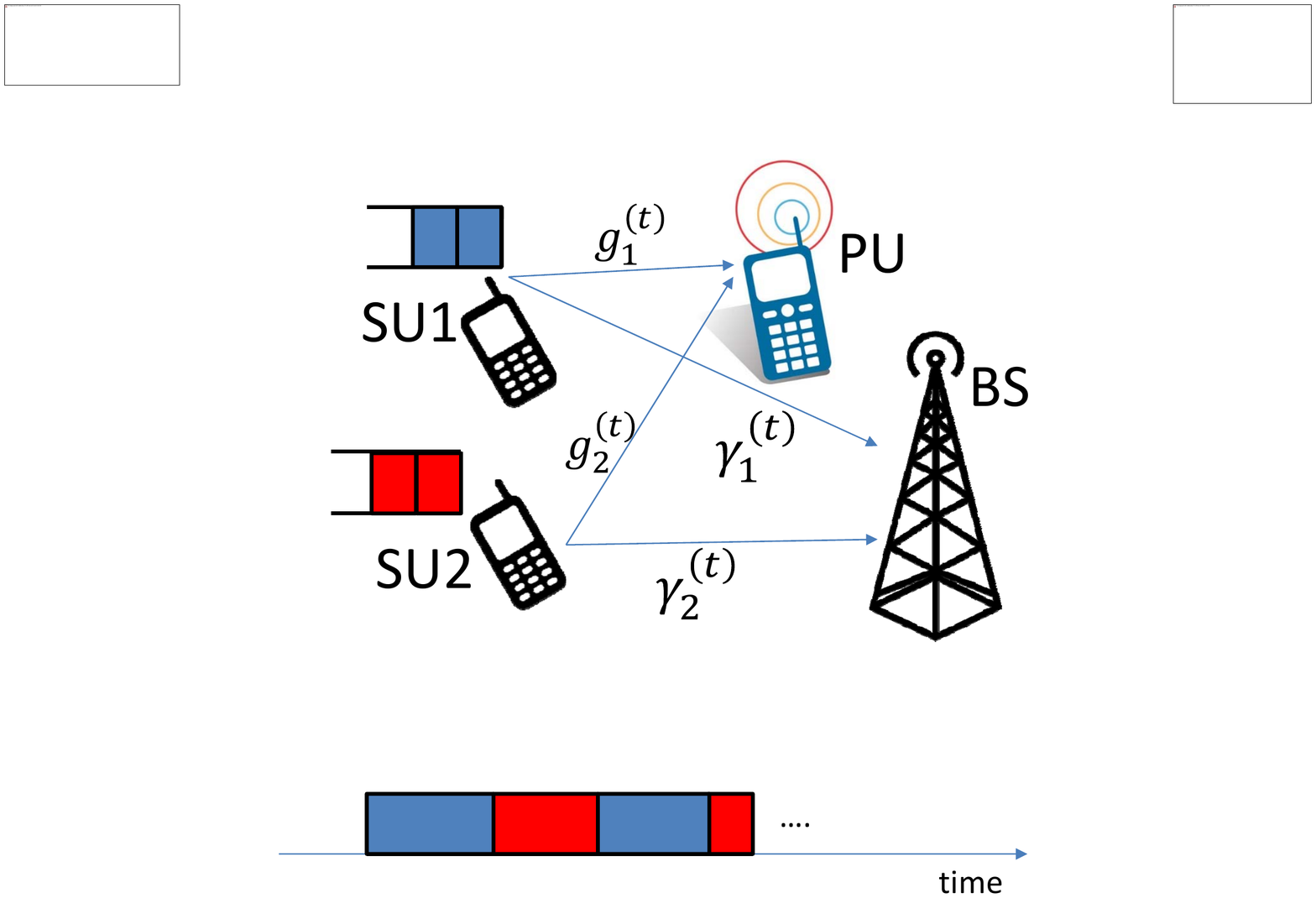}%
\caption{An uplink CR system with $N$ SUs ($N=2$ in this figure) communicating with their BS. There exists an interference link between each SU and the existing PU. The PU is assumed to be using the channel continuously.}%
\label{Cell_Fig}%
\end{figure}

\subsection{Channel and Interference Model}
We assume a time slotted structure where each slot is of duration $\Ts$ seconds. The channel between $\SU{i}$ and the BS (and that between $\SU{i}$ and the PU) is block fading with instantaneous power gain $\gamma_i^{(t)}$ (with gain $g_i^{(t)}$), at time slot $t$, following the probability mass function $\fgammai(\gamma)$ with mean $\bgamma_i$ ($\fgi(g)$ with mean $\bgi$) and i.i.d. across time slots, and $\gammamax$ ($\gmax$) is the maximum gain $\gamma_i^{(t)}$ ($\git$) could take. The channel gains are statistically independent and heterogeneous across SUs. We assume perfect knowledge of $\gamma_i^{(t)}$ and $g_i^{(t)}$ at the beginning of slot $t$ through some channel estimation phase that is out of the scope of this work (see \cite[Section VI]{haykin2005cognitive} and \cite{Bari13ciss,Bari13asilomar,Bari14asilomar,Bari15spl,Bari15ciss1,Bari15asilomar1} for different channel estimation techniques in CRs). At time slot $t$, $\SU{i}$ transmits with rate $\Ri(\Pit)\triangleq\log \lb 1+P_i^{(t)}\gamma_i^{(t)} \rb$, where $P_i^{(t)}\leq\Pmax$ is is $\SU{i}$'s transmission power at slot $t$, for some maximum value $\Pmax$. We assume that there exists a finite maximum rate $\Rmax\triangleq\log \lb 1+\Pmax\gammamax \rb$ that $\SU{i}$ cannot exceed.% Moreover, we define $\mu_i(P)=\EE{\Ri(P)}/L$ packets/slot.

\subsection{Queuing Model}
\subsubsection{Arrival Process} We assume that packets arrive to the $\SU{i}$'s buffer at the beginning of each slot. The number of packets arriving to $\SU{i}$'s buffer follows a Bernoulli process with a fixed parameter $\lambda_i$ packets per time slot. Following the literature, packets are buffered in infinite-sized buffers \cite[pp. 163]{Bertsekas_Data_Networks} and are served according to the first-come-first-serve discipline. Each packet has a fixed length of $L$ bits that is constant for all users where $L\gg\Rmax$ which is a typical case for packets with large sizes as video packets \cite{semiconductor2008long}.

\subsubsection{Service Process} When $\SU{i}$ is scheduled for transmission at slot $t$, it transmits $M_i^{(t)}$ bits of the head-of-line (HOL) packet of its queue. The remaining bits of this HOL packet remain in the HOL of $\SU{i}$'s queue until it is reassigned the channel in subsequent time slots. Here $M_i^{(t)}\triangleq\min \lb \Ri,\HOL_i(t) \rb$ bits, where $\HOL_i(t+1)\triangleq \HOL_i(t) - M_i^{(t)}$ is the remaining number of bits of the HOL packet at $\SU{i}$ at the end of slot $t$. $\HOL_i(t)$ is initialized by $L$ whenever a packet joins the HOL position of $\SU{i}$'s queue so that it always satisfies $0\leq \HOL_i(t)\leq L$, $\forall t$. A packet is not considered transmitted unless all its $L$ bits are transmitted, i.e. unless $\HOL_i(t)$ becomes zero, at which point $\SU{i}$'s queue decreases by 1 packet. At the beginning of slot $t+1$ the following packet in the buffer, if any, becomes $\SU{i}$'s HOL packet and $\HOL_i(t+1)$ is reset back to $L$ bits. The $\SU{i}$'s queue evolves as $Q_i^{(t+1)}= (Q_i^{(t)} + \vert \script{A}_i^{(t)}\vert - S_i^{(t)})^+$, where $\script{A}_i^{(t)}$ is the set carrying the index of the packet, if any, arriving to $\SU{i}$ at slot $t$, thus $\vert \script{A}_i^{(t)}\vert$ is either $0$ or $1$ since at most one packet per slot can arrive to $\SU{i}$; %\footnote{Our model is also valid if the arrivals follow a discrete time process where $\script{A}_i^{(t)}$ will, then, represent the packets arrived at the beginning of slot $t$.}
the packet service indicator $S_i^{(t)}=1$ if $\HOL_i(t)$ becomes zero at slot $t$.

The service time $s_i(P_i)$ of $\SU{i}$ is the number of time slots required to transmit one packet for $\SU{i}$, excluding the service interruptions, for some arbitrary power control policy $P_i$. The service time is assumed to follow a general distribution throughout the paper that depends on the distribution of $\Pit\gamma_i^{(t)}$.

We define the delay $W_i^{(j)}$ of a packet $j$ as the total number of time slots packet $j$ spends in $\SU{i}$'s buffer from the slot it joined the queue until the slot when its last bit is transmitted. %\footnote{We do not include the transmission slot when calculating $W_i^{(j)}$, but we include the residual time which is non-zero since packets are allowed to arrive in the middle of a time slot.}
The time-average delay experienced by $\SU{i}$'s packets is given by \cite{li2011delay}
\begin{equation}
\bW_i \triangleq \lim_{T \rightarrow \infty} \frac{\EE{\sum_{t=1}^T{\sum_{j\in\script{A}_i^{(t)}}{W_i^{(j)}}}}}{\EE{\sum_{t=1}^T{\vert \script{A}_i^{(t)}\vert}}}
\label{Delay}
\end{equation}
which is the expected total amount of time spent by all packets arriving in a time interval, of a large duration, normalized by the expected number of packets that arrived in this interval.% $j$ arrives to $\SU{i}$'s buffer at some time during slot $k_1$, waits in the buffer until it becomes the head-of-line (HOL) packet at instant $t_2$, then is transmitted at time instant $t_3$ and takes a duration of $S_i(\gamma_i(j))$ units of time for transmission to the BS, at which it is assumed to have left the system. Hence, the total waiting time of this packet is $W_i(j)=t_3-t_1+S_i(\gamma_i(j))$ which has a mean $\bW_i(\SA)$ that depends on the scheduling policy $\SA$ as well as the power $P_i$. We will drop the argument $\SA$ whenever it is clear from the context.

%\subsection{Transmission Process}
%%At the beginning of frame $k$, the BS selects the priority list $\bfpi(k) \parFdef{\pi}$ where $\pi_j(k)$ is the index of the SU who will be assigned priority $j$ during frame $k$.
%At the beginning of each time slot $t$, the BS schedules a SU and broadcasts its index $i^*$ and its power $P_{i^*}^{(t)}$ to all SUs on a common control channel. $\SU{i^*}$, in turn, begins transmission of $M_{i^*}^{(t)}$ bits of its HOL packet with a constant power $P_{i^*}^{(t)}$. We assume the BS receives these bits error-free by the end of slot $t$ then a new time slot $t+1$ starts. In this paper, our main goal is the selection of the $\SU{i^*}$ which is a scheduling problem, as well as the choice of the power $P_{i^*}^{(t)}$ which is power control. We now elaborate further on this problem.
%%It will be shown that the proposed policy proposed in Section \ref{Proposed_Algorithm} can be implemented in a distributed fashion allowing the optimal power $P_{i^*}^{(t)}$ to be calculated locally.

\section{Problem Statement}
\label{Prob_Statement}
Each $\SU{i}$ has an average delay constraint $\bW_i \leq d_i$ that needs to be satisfied. Moreover, the PU can tolerate an interference level of $\Iinst$ at any given slot. In this work, we are interested in frame-based scheduling policies where frame $k$ consists of a random number $T_k\triangleq\vert \script{F}(k)\vert$ of consecutive time-slots, where $\script{F}(k)$ is the set containing the indices of the time slots belonging to frame $k$ (see Fig. \ref{Frame_Structure}). The idea of dividing time into frames and assigning fixed scheduling policy for each frame was also used in \cite{li2011delay}. Where each frame begins and ends is specified by idle periods and will be precisely defined later in this section. During frame $k$, SUs are scheduled according to some preemptive-resume \cite[pp. 205]{Bertsekas_Data_Networks} priority list $\bfpi(k)\parFdef{\pi}$ that is fixed during the entire frame $k$, where $\pi_j(k)$ is the index of the SU who is given the $j$th priority during frame $k$. Each frame consists of exactly one \emph{idle period} followed by exactly one \emph{busy period}, both are defined next.

\begin{figure}
\centering
\includegraphics[width=0.8\columnwidth]{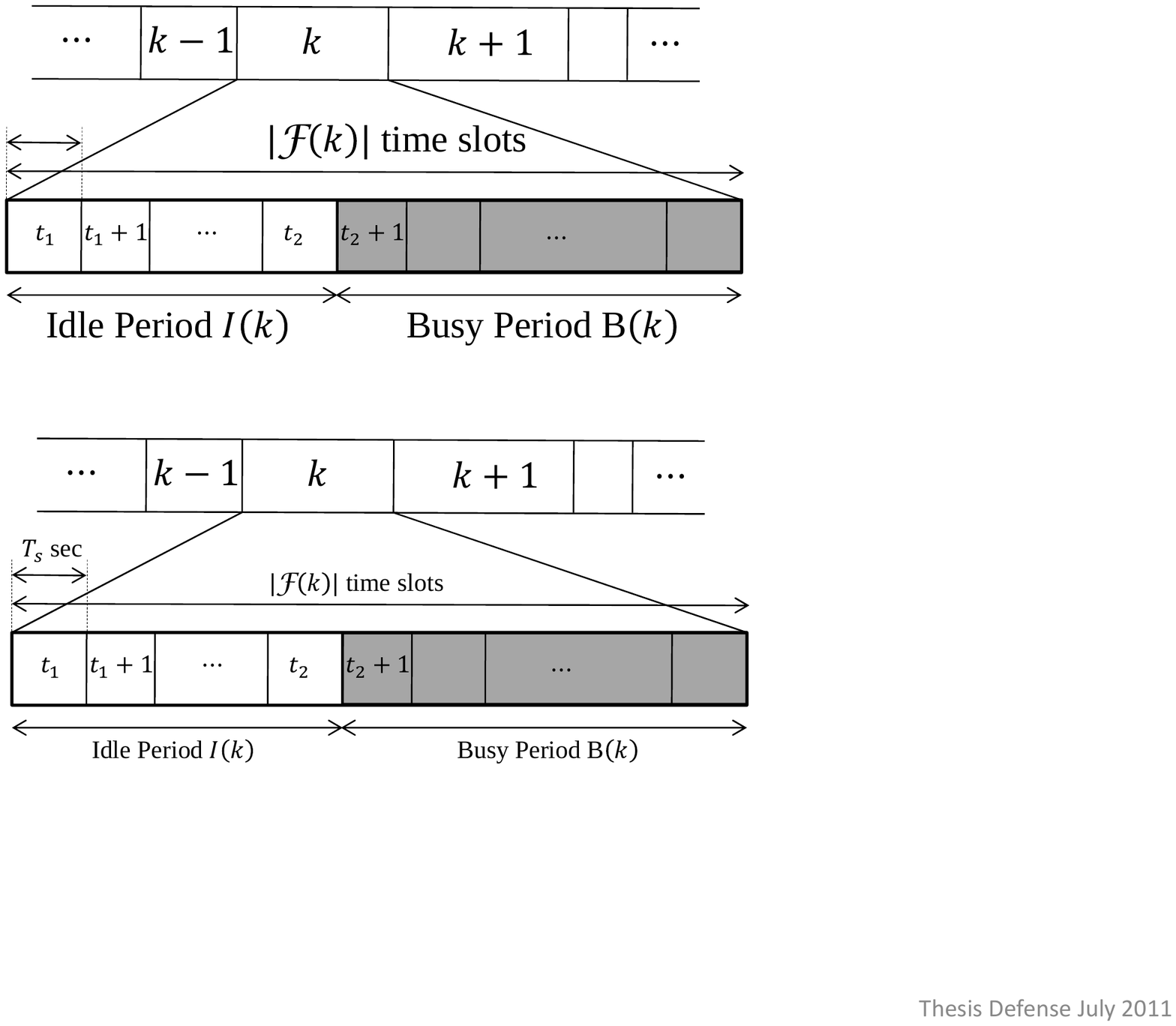}%
\caption{Time is divided into frames. Frame $k$ has $\FDurK\triangleq\vert \script{F}(k)\vert$ slots, each is of duration $\Ts$ seconds. Different frames can have different number of time slots.}%
\label{Frame_Structure}%
\end{figure}

\begin{Def}
\label{Idle_Def}
An idle period is the time interval formed by the consecutive time slots where all SUs have empty buffers. An idle period starts with the time slot $t_1$ following the completion of transmission of the last packet in the system, and ends with a time slot $t_2$ when one or more of the SUs' buffer receives one a new packet to be transmitted (see Fig. \ref{Frame_Structure}).% In other words, $t_1$ satisfies $\sum_{i\in\script{N}}Q_i^{(t_1)}=0$ and $\sum_{i\in\script{N}}Q_i^{(t_1-1)}\neq 0$, while $t_2$ satisfies $\sum_{t=t_1}^{t_2-1}\sum_{i\in\script{N}}Q_i^{(t)}=0$ and $\sum_{i\in\script{N}}Q_i^{(t_2)}\neq 0$.
\end{Def}

\begin{Def}
\label{Busy_Def}
Busy period is the time interval between two consecutive idle periods.
\end{Def}
The duration of the idle period $I(k)$ and busy period $B(k)$ of frame $k$ are random variables, thus $T_k=I(k)+B(k)$ is random as well. Since frames do not overlap, if $t\in\script{F}(k_1)$ then $t \notin \script{F}(k_2)$ as long as $k_1\neq k_2$. Given some priority list $\bfpi$, after dropping the index $k$ for simplicity, $\SU{\pi_j}$'s statistical average delay $\bW_{\pi_j}^{\rm s}(k)$ is \cite[pp. 205]{Bertsekas_Data_Networks}
\begin{equation}
\frac{\EE{s_{\pi_j}(P_{\pi_j})}}{\lb 1-\brho_{\pi_{j-1}}\rb}+ \frac{\sum_{l=1}^j\lambda_{\pi_l}\EE{s_{\pi_l}^2(P_{\pi_l})}}{\lb 1-\brho_{\pi_{j-1}}\rb\lb 1-\brho_{\pi_{j-1}} - \rho_{\pi_j}(P_{\pi_j})\rb}.
\label{Priority_Delay}
\end{equation}
We note that $\bW_{\pi_j}^{\rm s}$ depends on $\EE{s_{\pi_l}(P_{\pi_l})}$ and $\EE{s_{\pi_l}^2(P_{\pi_l})}$ that are not known in closed-form in the literature.
% We can write an equivalent equation for the average delay equation in \eqref{Delay} as
%\begin{equation}
%\bW_i \triangleq \lim_{K \rightarrow \infty} \frac{\EE{\sum_{k=0}^K \lb\sum_{j\in \script{A}_i(k)}W_i^{(j)}\rb}}{\EE{\sum_{k=0}^{K}{\vert\script{A}_i(k)\vert}}}
%\label{Delay_Frame}
%\end{equation}
%where $\script{A}_i(k)\triangleq\cup_{t\in\script{F}(k)}\script{A}_i^{(t)}$ is the set of all packets that arrive at $\SU{i}$'s buffer during frame $k$. We note that the long-term average delay $\bW_i$ in \eqref{Delay_Frame} depends on the chosen priority lists as well as the power control policy, in all frames $k\geq 0$.

%\subsection{Problem Statement}
The main objective of this work is to solve
\begin{equation}
\begin{array}{ll}
\underset{\{\bfP^{(t)}\},\{\bfpi(k)\}}{\rm{minimize}}& \sum_{i=1}^N \bW_i\\
\label{Problem}
\rm{subject \; to} & \sum_{i=1}^N {P_i^{(t)}g_i^{(t)}} \leq \Iinst \hspace{0.25in} , \hspace{0.25in} \forall t\geq 1\\
& \bW_i \leq d_i\\
& P_i^{(t)} \leq \Pmax \hspace{0.1in} , \hspace{0.1in} \forall i \in \script{N} \rm{\; and \;}\forall t \geq 1,\\
& \sum_{i=1}^N{  \mathds{1} \lb P_i^{(t)}\rb} \leq 1 \hspace{0.25in}, \hspace{0.25in} \forall t \geq 1,
\end{array}
\end{equation}
where $\bfP^{(t)} \pardef{P}$, while $\mathds{1}(x)\triangleq1$ if $x\neq 0$ and $0$ otherwise. The last constraint indicates that no more than a single SU is to be transmitting at slot $t$. %As pointed out in Section \ref{Frame_Based_Policy}, the average delay $\bW_i$ in equation \eqref{Delay} is a function of the priority lists $\seq{\bfpi(k)}$. Hence, the objective function of \eqref{Problem} is a complicated function in the priority lists $\seq{\bfpi(k)}$. This makes problem \eqref{Problem} a joint power control and scheduling problem that is difficult solve using conventional convex, or non-convex, optimization problem algorithms. Thus, 
We next propose a low complexity update policy and show its optimality.

\section{Proposed Power Allocation and Scheduling Policy}
\label{Proposed_Algorithm}
We solve the problem by proposing an online policy that dynamically updates the power control vector $\bfP^{(t)}$ and the priority vector $\bfpi(k)$. We show that this policy has an asymptotically optimal performance. That is, we can achieve a delay arbitrarily close to the optimal value depending on some control parameter $V$.

\subsection{Satisfying Delay constraints}
In order to guarantee a feasible solution satisfying the delay constraints in problem \eqref{Problem}, we set up a ``virtual queue'' associated with each delay constraint in problem \eqref{Problem}. The virtual queue for $\SU{i}$ at frame $k$ is given by
\begin{equation}
Y_i(k+1)=\lb Y_i(k)+\sum_{j\in \script{A}_i(k)}{\lb W_i(j)-r_i(k)\rb} \rb^+
\label{Delay_Q}
\end{equation}
where $r_i(k)\in[0,d_i]$ is an auxiliary variable, that is to be optimized over, while $\script{A}_i(k)\triangleq\cup_{t\in\script{F}(k)}\script{A}_i^{(t)}$ is the set of all packets arrived at $\SU{i}$'s buffer during frame $k$. We define $\bfY(k) \parFdef{Y}$ for notational convenience. Equation \eqref{Delay_Q} is calculated at the end of frame $k$ and represents the amount of delay exceeding the delay bound $d_i$ up to the end of frame $k$. We first mention the following definition, then state a lemma that gives a sufficient condition for the delay of $\SU{i}$ to satisfy $\bW_i \leq d_i$.
\begin{Def}
\label{Mean_Rate_Def}
We say that the random sequence $\{Y_i(k)\}_{k=0}^\infty$ is mean rate stable if and only if $\lim_{K\rightarrow\infty}\EE{Y_i(K)}/K=0$.
\end{Def}

\begin{lma}
\label{Mean_Rate_Lemma}
If $\{Y_i(k)\}_{k=0}^\infty$ is mean rate stable, then the time-average delay of $\SU{i}$ satisfies $\bW_i \leq d_i$.
\end{lma}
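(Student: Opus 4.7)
The plan is to apply the standard virtual-queue / Lyapunov-drift argument: I would exploit the recursion \eqref{Delay_Q} to produce a telescoping lower bound on $Y_i(K)$, and then let mean rate stability (Definition \ref{Mean_Rate_Def}) force the running-average delay of $\SU{i}$ below $d_i$.

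First, dropping the $(\cdot)^+$ in \eqref{Delay_Q} gives the inequality
\begin{equation*}
Y_i(k+1) \geq Y_i(k) + \sum_{j\in\script{A}_i(k)} \left( W_i(j) - r_i(k) \right).
\end{equation*}
Summing from $k=0$ to $K-1$, taking expectations, and using $Y_i(0) \geq 0$, I would arrive at
\begin{equation*}
\EE{Y_i(K)} \geq \EE{\sum_{k=0}^{K-1} \sum_{j\in\script{A}_i(k)} W_i(j)} - \EE{\sum_{k=0}^{K-1} r_i(k)\,|\script{A}_i(k)|}.
\end{equation*}
Since $0\leq r_i(k)\leq d_i$, the last expectation is at most $d_i\,\EE{\sum_{k=0}^{K-1}|\script{A}_i(k)|}$. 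Dividing through by $K$ and sending $K\to\infty$, the left-hand side vanishes by mean rate stability, yielding the frame-indexed bound
\begin{equation*}
\lim_{K\to\infty}\frac{1}{K}\EE{\sum_{k=0}^{K-1}\sum_{j\in\script{A}_i(k)} W_i(j)} \leq d_i\,\lim_{K\to\infty}\frac{1}{K}\EE{\sum_{k=0}^{K-1}|\script{A}_i(k)|}.
\end{equation*}

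The remaining task is to translate this frame-indexed inequality into the slot-indexed definition of $\bW_i$ in \eqref{Delay}. Because the sets $\{\script{F}(k)\}_{k\geq 0}$ partition the time axis, setting $T(K)\triangleq\sum_{k=0}^{K-1} T_k$ gives the identities $\sum_{k=0}^{K-1}|\script{A}_i(k)| = \sum_{t=1}^{T(K)}|\script{A}_i^{(t)}|$, and likewise for the delay sums. Dividing the numerator and the denominator of \eqref{Delay} by $T$, the denominator converges to $\lambda_i$ by the Bernoulli arrival model, while the numerator's running average has a well-defined limit under stability; evaluating these slot-indexed limits along the subsequence $T=T(K)$ would show that they coincide with the frame-averaged quantities above up to the common factor $\E[T_k]$, so the previous display would give $\bW_i \leq d_i$.

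The main obstacle is the bookkeeping in this last step. Since $T_k$ is itself random, equating the slot-indexed averages in \eqref{Delay} with the frame-indexed averages produced by the Lyapunov manipulation requires $\E[T_k]<\infty$ together with a renewal-reward-style argument to justify interchanging the slot and frame limits. Finiteness of $\E[T_k]$ in turn hinges on idle periods being geometric (immediate from Bernoulli arrivals) and busy periods having finite mean, which holds for any stabilizable workload---a property I would take as implicit in problem \eqref{Problem} being feasible.
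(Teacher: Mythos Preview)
Your proposal is correct and takes essentially the same route as the paper: the paper invokes Lemma~3 of \cite{li2011delay} to obtain the inequality \eqref{Wait_r_i}, namely $\bW_i \leq \lim_{K\to\infty}\EE{\sum_k \vert\script{A}_i(k)\vert\, r_i(k)}/\EE{\sum_k \vert\script{A}_i(k)\vert}$, and then replaces $r_i(k)$ by its bound $d_i$; your telescoping-plus-mean-rate-stability derivation is precisely how such a lemma is established. The frame-to-slot bookkeeping you flag as the remaining obstacle is likewise absorbed into that external citation in the paper's version.
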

\begin{proof}
Lemma 3 of \cite{li2011delay} can be modified to prove that
\begin{equation}
\bW_i \leq \lim_{K \rightarrow \infty} \frac{\EE{\sum_{k=1}^K{\vert \script{A}_i(k) \vert {r_i(k)}}}}{\EE{\sum_{k=1}^T{\vert\script{A}_i(k)\vert}}}.
\label{Wait_r_i}
\end{equation}
The proof follows by replacing $r_i(k)$ by its bound $d_i$ in \eqref{Wait_r_i}.
\end{proof}
Lemma \ref{Mean_Rate_Lemma} says that if the power control and scheduling policy results in a mean rate stable $\Yivq$, then the average delay constraint of problem \eqref{Problem} is satisfied.

\subsection{Proposed Policy}
\label{Algorithm_Subsection}
We now propose the \emph{Delay Optimal with Instantaneous Interference Constraint} (\emph{DOIC}) policy executed at the beginning of each frame $k$ for finding $\bfP^{(t)}$ as well as the optimum list $\bfpi(k)$, given some prespecified control parameter $V$. We first define the following power control policy
\begin{equation}
	P_i^{*(t)}=\min\lb\frac{\Iinst}{g_i^{(t)}},\Pmax\rb,
\label{Power_Allocation}
\end{equation}
that will later be shown to be the optimal power control policy.

\noindent{\bf {\DOIC} Policy:}
\begin{enumerate}
	\item The BS sorts the SUs according to the descending order of $Y_i(k)/\EE{s_i(P_i^*)}$. The sorted list is denoted by $\bfpi(k)$.
	\item At each slot $t\in\script{F}(k)$, among all SUs having non-empty buffers, $\SU{i^*}$, that has the highest priority in $\bfpi(k)$, is assigned the channel and transmits with power $P_{i^*}^{*(t)}$.
	\item At the end of frame $k$, for all $i\in \script{N}$ the BS updates: $r_i(k)= d_i$ if $V<Y_i(k)\lambda_i$, and $r_i(k)=0$ otherwise, and then $Y_i(k+1)$ via \eqref{Delay_Q}, $\forall i\in \script{N}$.
\end{enumerate}

The intuition behind the {\DOIC} policy is as follows. We define the Lyapunov function and Lyapunov drift to be $L \lb \bfY(k)\rb \triangleq \frac{1}{2}\sum_{i=1}^N Y_i^2(k)$ and $\Delta \lb \bfY(k)\rb \triangleq \EEC{L \lb \bfY(k+1) \rb - L \lb \bfY(k) \rb}$, respectively, where $\EEC{x}$ denotes the conditional expectation of $x$ given $\bfY(k)$, namely $\EEC{x} \triangleq \EE{x \vert \bfY(k)}$. Squaring equation \eqref{Delay_Q} and taking the conditional expectation, we get
\begin{equation}
\hspace{-0.3in}\begin{array}{ll}
	&\frac{1}{2}\EEC{Y_i^2(k+1)-Y_i^2(k)} =Y_i(k)\EEC{\FDurK}\lambda_i\times\\
& \hspace{0.35in}\lb\bW_i^{\rm s}(k) - r_i(k)\rb+ \EEC{\lb \sum_{j\in \script{F}_k} \lb W_i^{(j)} - r_i(k)\rb\rb^2}
\end{array}
\label{Delay_Q_Sq1}
\end{equation}
\begin{equation}
\leq Y_i(k) \EEC{\FDurK}\lambda_i \lb \bW_i^{\rm s}(k)-r_i(k)\rb + C_{Y_i},
\label{Delay_Q_Sq2}
\end{equation}
where the inequality in \eqref{Delay_Q_Sq2} comes from upper-bounding the last term in \eqref{Delay_Q_Sq1} by some finite constant $C_{Y_i}<\infty$. The existence of this finite constant can be shown by following steps similar to the proof of Lemma 4 in \cite{li2011delay}. We omit these steps for brevity. Given some fixed control parameter $V>0$, we add the penalty term $V\sum_i \EEC{r_i(k)\FDurK}$ to both sides of \eqref{Delay_Q_Sq2} and rearranging, drift-plus-penalty term becomes bounded by
$\Delta\lb \bfY(k)\rb + V\sum_{i=1}^N \EEC{r_i(k)\FDurK}\leq \EEC{\FDurK}\phi(k)+C_Y$ with $\phi(k)\triangleq\sum_{i=1}^N \lb V-Y_i(k) \lambda_i\rb r_i(k) + \sum_{j=1}^N Y_{\pi_j(k)}(k) \lambda_{\pi_j(k)}\bW_{\pi_j(k)}(k)$.

The {\DOIC} policy is defined as the policy that chooses the values of $\{\Pit\}_{t\in \script{F}(k)}$, $\forall i\in \script{N}$, and $\bfpi(k)$ along with the auxiliary variable $r_i(k)$, $\forall i\in \script{N}$, that minimize $\phi(k)$. Since the two summations in $\phi(k)$ are decoupled, we can use step 4.a to minimize the first summation, then state the next theorem to discuss the optimum power control policy and then finally state the scheduling rule. First, given some priority list $\bfpi$ used in frame $k$, we define $\rho(P_{\pi_j})\triangleq\lambda_{\pi_j}\EE{s_{\pi_j}(P_{\pi_j})}$ and $\brho_{\pi_j}\triangleq \sum_{l=1}^j \rho_{\pi_l}(P_{\pi_l})$, where $P_{\pi_j}$ is any arbitrary power control policy that controls $\SU{\pi_j}$'s power $P_{\pi_j}^{(t)}$ $\forall t\in\script{F}(k)$.
\begin{thm}
\label{Service_Time_Second_Moment}
Given some priority list $\bfpi$ used during frame $k$, the first and second moments of $s_i(P_i)$ in \eqref{Priority_Delay} are given by $\EE{s_i(P_i)}=L/\EE{\Ri(\Pit)}$ and $\EE{s_i^2(P_i)}=\sum_{\tau_1=1}^L\sum_{\tau_2=1}^L\Pr[\sum_{t=1}^{\max(\tau_1,\tau_2)-1}\log(1+\Pit\gamma_i^{(t)})\leq L-1]$, respectively, for any arbitrary power control policy $P_i$. Moreover, the power control policy in \eqref{Power_Allocation} minimizes $\phi(k)$ for any priority list $\bfpi$.%$\bW_{i}(k)$ is a decreasing function in $\Pit$ $\forall i\in \script{F}(k)$
\end{thm}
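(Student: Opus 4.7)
The plan is to establish the three assertions in the theorem — the closed form for $\EE{s_i(P_i)}$, the closed form for $\EE{s_i^2(P_i)}$, and the optimality of $P_i^{*(t)}$ — separately, since each relies on a different tool.

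For the first moment, I would regard $s_i(P_i)$ as the stopping time $s_i=\inf\{n\ge 1:\sum_{t=1}^n \log(1+\Pit\gamma_i^{(t)})\ge L\}$ with respect to the i.i.d.\ sequence $\{\log(1+\Pit\gamma_i^{(t)})\}_t$ (i.i.d.\ across $t$ because $\gamma_i^{(t)}$ is i.i.d.\ and the power-control rule is a memoryless function of the per-slot gain). Wald's identity then yields $\EE{\sum_{t=1}^{s_i}\Ri(\Pit)}=\EE{s_i}\,\EE{\Ri(\Pit)}$. The left-hand side equals $L$ plus an overshoot that is pathwise bounded by $\Rmax$, and the standing assumption $L\gg \Rmax$ collapses the identity to $\EE{s_i(P_i)}=L/\EE{\Ri(\Pit)}$ up to lower-order terms.

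For the second moment, I would use the identity $s_i^2=\sum_{\tau_1=1}^{s_i}\sum_{\tau_2=1}^{s_i}1=\sum_{\tau_1,\tau_2\ge 1}\mathds{1}(s_i\ge \tau_1)\mathds{1}(s_i\ge \tau_2)=\sum_{\tau_1,\tau_2\ge 1}\mathds{1}(s_i\ge \max(\tau_1,\tau_2))$. Taking expectations and rewriting the event $\{s_i\ge m\}$ as $\{\sum_{t=1}^{m-1}\log(1+\Pit\gamma_i^{(t)})\le L-1\}$ (the packet has not yet been cleared by the end of slot $m-1$) produces the stated summand $\Pr[\sum_{t=1}^{\max(\tau_1,\tau_2)-1}\log(1+\Pit\gamma_i^{(t)})\le L-1]$. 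Truncating the outer double sum at $L$ is justified in the same large-packet regime: the inner probability decays rapidly once $\max(\tau_1,\tau_2)$ is large enough for the partial sum of rates to exceed $L-1$ with overwhelming probability, so the $L<\max(\tau_1,\tau_2)$ tail contributes negligibly. For the optimality claim, I would note that the first summation in $\phi(k)$ does not depend on the power vector, so it suffices to minimize the second summation slotwise. Inspecting \eqref{Priority_Delay}, each $\bW_{\pi_j}^{\rm s}(k)$ is non-increasing in every $\EE{s_{\pi_l}(P_{\pi_l})}$ and $\EE{s_{\pi_l}^2(P_{\pi_l})}$, both directly through the numerators and indirectly through $\brho$ in the denominators. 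From the closed forms derived in the first two parts, both moments are pathwise non-increasing in each $\Pit$, since $\Ri(\Pit)=\log(1+\Pit\gamma_i^{(t)})$ is monotone in $\Pit$. Hence each $\bW_{\pi_j}^{\rm s}(k)$ is non-increasing in every $\Pit$, and the minimizer is the slotwise maximum feasible power. Under the scheduling constraint that at most one SU transmits per slot, the interference constraint $\sum_i \Pit g_i^{(t)}\le \Iinst$ collapses to $\Pit g_i^{(t)}\le \Iinst$; combined with $\Pit\le\Pmax$, the largest feasible choice is precisely \eqref{Power_Allocation}.

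The main obstacle is the rigorous control of the two boundary effects — the Wald overshoot for the first moment and the truncation of the double sum at $L$ for the second moment — both of which must be argued to be lower-order under $L\gg\Rmax$. A secondary technical point is to verify that the queueing-stability condition $\brho_{\pi_{j-1}}+\rho_{\pi_j}(P_{\pi_j})<1$ needed for \eqref{Priority_Delay} to be finite continues to hold at $P_i^{*(t)}$; since $P_i^{*(t)}$ is the maximal feasible power it minimizes $\rho$, so stability under any feasible policy automatically transfers to the proposed policy, and no additional assumption is required.
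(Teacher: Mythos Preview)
Your argument is correct and, for the second-moment formula and the optimality claim, tracks the paper's proof closely: both expand $s_i^2$ as $\sum_{\tau_1}\sum_{\tau_2}\mathds{1}(s_i\ge\tau_1)\mathds{1}(s_i\ge\tau_2)$ and take expectations, and both reduce the power-control step to monotonicity of $\bW_{\pi_j}^{\rm s}$ in each $\Pit$ (the paper phrases the last step as ``Lagrange optimization'' where you invoke the collapse of the interference constraint to $\Pit g_i^{(t)}\le\Iinst$ under single-user scheduling; the content is identical). The one genuine methodological difference is the first moment. The paper simply says $\EE{s_i(P_i)}$ is ``derived similarly'', i.e.\ by the same indicator decomposition, which would produce $\EE{s_i}=\sum_{\tau=1}^L\Pr\bigl[\sum_{t=1}^{\tau-1}\log(1+\Pit\gamma_i^{(t)})\le L-1\bigr]$ --- a tail-sum that does not by itself simplify to the ratio $L/\EE{\Ri(\Pit)}$. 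Your Wald-identity route, with the overshoot bounded by $\Rmax\ll L$, is what actually delivers the stated ratio form, so it is the more convincing argument for that part. Your explicit flagging of the two approximation steps (Wald overshoot and the truncation of the double sum at $L$) is also more careful than the paper, which writes $s_{(j)}=\sum_{\tau=1}^L\mathds{1}(\tau,j)$ without commenting on the upper limit.
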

\begin{proof}
We derive here the $\EE{s_i^2(P_i)}$ while $\EE{s_i(P_i)}$ is derived similarly. Let $\sij$ be the service time of packet $j$. Thus $\sij^2=\lb\sum_{\tau=1}^L \mathds{1}(\tau,j)\rb^2=\sum_{\tau_1=1}^L \sum_{\tau_2=1}^L \mathds{1}(\tau_1,j)\mathds{1}(\tau_2,j)$ where $\mathds{1}(\tau,j)=1$ if any portion of packet $j$ was transmitted at slot $\tau$, and $0$ otherwise, that is
\begin{equation}
\mathds{1}(\tau_1,j)=\left\{
\begin{array}{ll}
	1 &\sum_{t=1}^{\tau_1-1}\log\lb 1+\Pit\gamma_i^{(t)}\rb\leq L-1\\
	0 &\mbox{o.w.}
\end{array}
\right.
\label{Indicator}
\end{equation}
which means that the product $\mathds{1}(\tau_1,j)\mathds{1}(\tau_2,j)$ can be given by
\begin{equation}
\left\{
\begin{array}{ll}
	1 &\sum_{t=1}^{\max\lb\tau_1,\tau_2\rb-1}\log\lb 1+\Pit\gamma_i^{(t)}\rb\leq L-1\\
	0 &\mbox{o.w.}
\end{array}
\right.
\label{Indicator2}
\end{equation}
The time average $\EE{s_i^2(P_i)}$ is given by
\begin{equation}
\lim_{J\rightarrow\infty}\frac{\sum_{j=1}^J \sij^2}{J}=\sum_{\tau_1=1}^L \sum_{\tau_2=1}^L\EE{\mathds{1}(\tau_1,j)\mathds{1}(\tau_2,j)}.
\label{Time_Avg_2nd_Moment}
\end{equation}
Substituting by the expectation of \eqref{Indicator2} in \eqref{Time_Avg_2nd_Moment} completes the first part of the theorem. The second part is derived by showing that both $\EE{s_i(P_i)}$ and $\EE{s_i^2(P_i)}$, and hence $\bW_{\pi_j}^{\rm s}(k)$, are decreasing in $\Pit$ then using the Lagrange optimization to find $\Pit$, $\forall i\in\script{N}$, that minimize $\phi(k)$ to yield \eqref{Power_Allocation}.
\end{proof}

Theorem \ref{Service_Time_Second_Moment} shows the optimum power control policy to minimize $\phi(k)$. Now, to choose $\bfpi(k)$ we use the $c\mu$ rule that minimizes the second summation in $\phi(k)$ as demonstrated in step 1 of the {\DOIC} policy.

%Before we discuss the optimality of the {\DOIC} in Theorem \ref{Optimality}, we define the following quantities. Let $a\triangleq 1-\Pi_{i=1}^N \lb 1-\lambda_i \rb$ denote the probability of receiving a packet from a user or more at a given time slot, while $C_Y\triangleq\sum_{i=1}^N C_{Y_i}$ with $C_{Y_i}\triangleq \sqrt{\EE{A^4}\EE{B^4}} + d_i^2\EE{A^2}$, where $\EE{A^2}$ and $\EE{A^4}$ are bounds on the second and fourth moments of the total number of arrivals $\sum_i\vert \script{A}_i(k)\vert$ during frame $k$, respectively, while $\EE{B^4}$ is a bound on the fourth moment of the busy period $B(k)$. The finiteness of these moments can be shown to hold if the first four moments of the service time are finite. It is shown in \cite{Journal_Submitted} that the first four moments of the service time exist for any distributions of the channels $\gamma_i^{(t)}$ and $\git$. We next discuss the optimality of {\DOIC} in Theorem \ref{Optimality}.

\begin{thm}
\label{Optimality}
If \eqref{Problem} is strictly feasible, then there exists some finite constant $C_Y$ such that satisfying
\begin{equation}
\sum_{i=1}^N{\bW_i} \leq \frac{C_Y}{V} + \sum_{i=1}^N{\bW_i^*}
\label{Optimality_Equation}
\end{equation}
when the BS adopts the {\DOIC} policy, where $\bW_i^*$ is the optimum value of the delay when solving problem \eqref{Problem}. Moreover, the virtual queues $\Yivq$ are mean rate stable $\forall i \in \script{N}$.
\end{thm}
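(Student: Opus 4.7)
The plan is to follow the standard Lyapunov drift-plus-penalty template and build directly on the bound already derived in Section \ref{Algorithm_Subsection}. Specifically, I would take as my starting point the frame-wise inequality
\begin{equation}
\Delta\lb \bfY(k)\rb + V\sum_{i=1}^N \EEC{r_i(k)\FDurK} \leq \EEC{\FDurK}\phi(k)+C_Y,
\label{Start_Ineq}
\end{equation}
together with the crucial fact that the {\DOIC} policy was defined to minimize $\phi(k)$ over all admissible choices of $\{\Pit\}_{t\in\script{F}(k)}$, $\bfpi(k)$, and $\{r_i(k)\}_{i\in\script{N}}$. Theorem \ref{Service_Time_Second_Moment} confirms that \eqref{Power_Allocation} is the optimizing power choice, Step 1 yields the optimizing priority list via the $c\mu$ rule, and Step 3 yields the optimizing $r_i(k)$.

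The main work is then to compare {\DOIC} against the optimal policy for \eqref{Problem}. Here I would invoke the standard Caratheodory-type existence argument (see Neely's framework as used in \cite{li2011delay}): under the strict feasibility hypothesis, there exists a stationary randomized policy $\pi^\star$ that is independent of $\bfY(k)$, depends only on the channel realizations, achieves an expected per-frame delay penalty equal to $V\EEC{\FDurK}\sum_i\lambda_i\bW_i^*$, and uses auxiliary variables $r_i^\star(k)=d_i^\star\le d_i$ chosen so that the virtual queues $\Yivq$ are mean rate stable. Since {\DOIC} minimizes $\phi(k)$ in every frame, the right-hand side of \eqref{Start_Ineq} under {\DOIC} is no larger than the corresponding expression evaluated at $\pi^\star$; after dropping the non-negative drift term and using the stationary structure of $\pi^\star$, this yields
\begin{equation}
\Delta\lb \bfY(k)\rb + V\sum_{i=1}^N \EEC{r_i(k)\FDurK} \leq V\EEC{\FDurK}\sum_{i=1}^N \lambda_i\bW_i^* + C_Y.
\label{Compare_Ineq}
\end{equation}

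From here the argument is routine. Taking total expectations in \eqref{Compare_Ineq}, telescoping over $k=0,\ldots,K-1$, and dividing by $K$ produces
\begin{equation}
\frac{\EE{L\lb\bfY(K)\rb}}{K}+\frac{V}{K}\sum_{k=0}^{K-1}\sum_{i=1}^N\EE{r_i(k)\FDurK} \leq V\bar{T}\sum_{i=1}^N\lambda_i\bW_i^* + C_Y,
\end{equation}
where $\bar T$ is the long-run average frame length. Letting $K\to\infty$, rearranging, dividing by $V$ and combining with Lemma \ref{Mean_Rate_Lemma} (which converts a bound on the averaged $r_i(k)$ into a bound on $\bW_i$) gives \eqref{Optimality_Equation} with a suitably redefined constant $C_Y$. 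For the second claim, since $\EE{L(\bfY(K))}\ge \frac12\EE{Y_i^2(K)}\ge\frac12(\EE{Y_i(K)})^2$, the same telescoped inequality forces $\EE{Y_i(K)}/K\to 0$, which is exactly mean rate stability by Definition \ref{Mean_Rate_Def}.

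The chief obstacle I anticipate is step two: rigorously producing the comparison stationary policy $\pi^\star$ and showing that its expected $\phi(k)$ is bounded by $V\EEC{\FDurK}\sum_i\lambda_i\bW_i^*$. This requires care because our frame lengths $\FDurK$ are random and the constraints in \eqref{Problem} couple the PU interference, the peak power, and the single-SU transmission constraints slot-by-slot. Everything else—minimality of $\phi(k)$ under {\DOIC}, telescoping, and extracting mean rate stability from $\EE{L(\bfY(K))}/K$ bounds—is mechanical once this existence result is in hand.
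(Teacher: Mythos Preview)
Your proposal is correct and follows essentially the same approach the paper intends: the paper's proof simply states that it ``follows along the lines of the proof of Theorem~2 in \cite{li2011delay},'' which is precisely the Lyapunov drift-plus-penalty argument you outline (minimize $\phi(k)$ under {\DOIC}, compare against a stationary randomized optimal policy guaranteed by strict feasibility, telescope, and extract both \eqref{Optimality_Equation} and mean rate stability). Your identification of the main technical burden---constructing the comparison policy $\pi^\star$ with random frame lengths---is exactly the point where the argument in \cite{li2011delay} must be adapted to the present setting.
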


\begin{proof}
\label{Optimality_Proof}
It follows along the lines of the proof of Theorem 2 in \cite{li2011delay} and thus omitted due to lack of space.
\end{proof}

Theorem \ref{Optimality} says that the objective function of problem \eqref{Problem} is upper bounded by the sum of the optimum values $\bW_i^*$ plus some constant that vanishes as $V\rightarrow\infty$. The drawback of setting $V$ very large is that the policy converges slower. That is, the virtual queues become mean rate stable after a larger number of frames. Having a vanishing gap means that {\DOIC} is asymptotically optimal. Moreover, since $\Yivq$ is mean rate stable, the constraint $\bW_i\leq d_i$ is satisfied $\forall i$.% We note that the complexity of the {\DOIC} policy is polynomial in time since it only requires sorting the quantity $Y_i(k)/\EE{s_i^*}$, $i \in \script{N}$, in a descending order.

\section{Simulation Results}
\label{Results}
We simulated a system of $N=5$ SUs. Unless otherwise specified, Table \ref{Parameters} lists all parameter values. $\SU{i}$'s arrival rate is set to $\lambda_i=i\lambda$ for some fixed parameter $\lambda$.  All SUs are having homogeneous channel conditions except $\SU{5}$ who has the highest $\overline{g}_5$. Thus $\SU{5}$ is statistically the worst case user.% We assume that the SUs' delay constraints are $d_i=\dLow$ $\forall i\leq 4$, and $d_5=\dHigh$. In practice, $\Ts$ is around 1ms. We have chosen the values of $d_i$ to provide stringent QoS guarantees based on the $150ms$ average delay value for video packets recommended by CISCO (see \cite{Cisco_QoS}).% Fig. \ref{PerUser_Delay_Avg_N5} plots the per-user  delay using the {\DOAC} policy for two cases; the first is with $d_5=\dHigh$ while the second is with $d_5=\dLow$, to show the effect of an active versus an inactive delay constraint. In the active constrained case, the {\DOAC} policy guarantees that $\bW_5\leq d_5$. We conclude that the delay constraints in problem \eqref{Prob} can force the SUs' delays to take any strictly feasible value.
\begin{table}
	\centering
		\caption{Simulation Parameter Values}
		\label{Parameters}
		\begin{tabular}{|c|c||c|c|}
			\cline{1-4}
			Parameter & Value & Parameter & Value \\
			\cline{1-4}
			$d_i$ $\forall i\leq 4$ & $\dHigh\Ts$ &$\gmax$ & $10\bgi$\\
			$d_5$ & $45\Ts$ & $\Iinst$ & 20 \\
			$L$ & $10^3$ bits/packet & $\Pmax$ & 100 \\
			$V$ & $100$ &$\bgamma_i$ $\forall i\leq 5$ & $1$  \\
			$\fgammai(\gamma)$ & $\exp{\lb-\gamma/\bgamma_i\rb}/\bgamma_i$ & $\overline{g}_i$ $\forall i\leq 4$ & $0.1$\\
			$\fgi(g)$ & $\exp{\lb-g/\overline{g}_i\rb}/\overline{g}_i$ & $\overline{g}_5$ & $0.4$
			\\ \cline{1-4}
			\end{tabular}
\end{table}

Fig. \ref{PerSU_Delay_DOIC} plots average per-SU delay $\bW_i$, from \eqref{Delay}, versus $\lambda$. The plot is for the {\DOIC} policy for two cases; the first is with $d_5=\dLow\Ts$ while the second is with $d_5=\dHigh\Ts$. We can see that $\SU{5}$ has the worst average delay. However, the {\DOIC} policy has forced $\bW_5$ to be smaller than $\dHigh\Ts$ for all $\lambda$ values. This comes at the cost of another user's delay. We conclude that the proposed policy can force the delay vector of the SUs to take any value as long as it is strictly feasible.

Fig. \ref{Sum_Delay_DOIC_CSMA_CNC} compares the delay performance of the {\DOIC} to two different schemes, namely the Carrier Sense Multiple Access (CSMA) algorithm, which allocates the channel equally likely among all SUs, as well as the Cognitive Network Control (CNC) algorithm proposed in \cite{Neely_CNC_2009} which is a version of the MaxWeight algorithm. Both schemes allocate the power according to \eqref{Power_Allocation}. The gap differences between these two policies and the {\DOIC} policy are over $8\%$ and $5\%$, respectively, at light traffic, and $16\%$ and $8\%$ at heavy traffic. The reason why the proposed policy outperforms both policies is because the it gives priority to SUs with the worst delay history, while the CSMA and CNC schedule the SUs to guarantee fairness and to maximize the achievable rate region, respectively.% The intuition behind this is as follows. The WQW's main goal is to minimize the sum delay, thus when the interference caused by SU $1$ is higher, then we expect an increase in its delay compared to SU $2$'s. 

%In Fig. \ref{Avg_Delay_Inst}, we plot the objective function for $h_i(x)=x^2/2$ $\forall i\in \script{N}$. We can see that the unconstrained problem has a delay smaller than that of the constrained one. This is because the scheduler our policy (i.e. equation \eqref{Obj_Minimizer}) schedules the users based on the history of the delay of each SU represented by the virtual queue $Y_i(t)$. This tells us that in order to have a good delay performance, the delay information needs to be incorporated into the scheduler $\Psi_i(t)$. This result is analogous to the result in \cite{tassiulas1993dynamic} that dictates that in order to have a  throughput-optimal scheduler, one must have the queue information incorporated in the scheduler. 

\begin{figure}%
\centering
\includegraphics[width=\columnwidth]{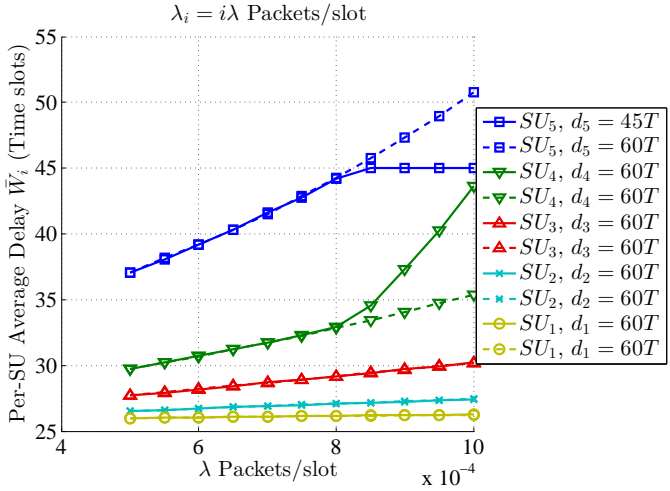}%
\caption{The {\DOIC} policy can force the average delay for any SU to take any value as long as it is strictly feasible.}%
\label{PerSU_Delay_DOIC}%
\end{figure}

\begin{figure}%
\centering
\includegraphics[width=\columnwidth]{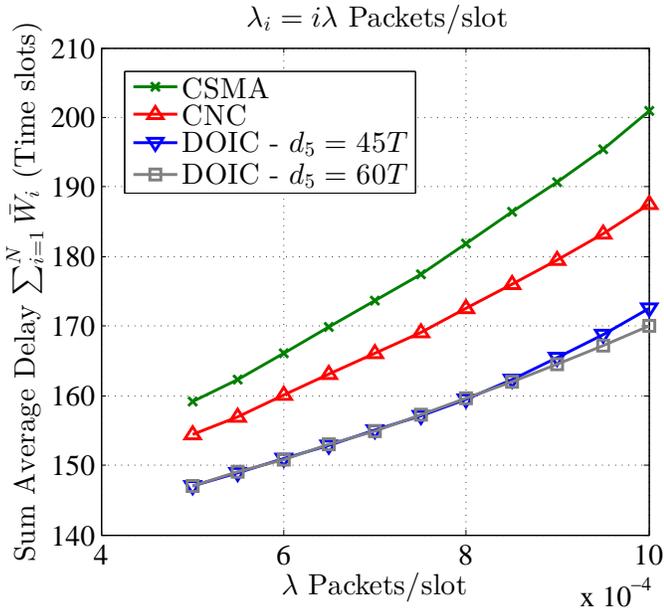}%
\caption{Sum of cost functions for the perfect as well as the imperfect channel sensing for both the constrained and unconstrained optimization problems.}%
\label{Sum_Delay_DOIC_CSMA_CNC}%
\end{figure}

\section{Conclusion}
\label{Conclusion}
We have studied the joint scheduling and power control problem of an uplink multi SU CR system. We formulated the problem as a delay minimization problem in the presence of instantaneous interference constraints to the PU. Most of the existing literature that study this problem either assume on-off fading channels or do not provide a delay-optimal policies which is essential for real-time applications. We derived closed-form expressions for the average delay in terms of any arbitrary power policy which helped in proposing a closed-form power-control policy as well as a low-complexity scheduling policy. We showed, through the Lyapunov optimization, that the proposed policy is asymptotically delay optimal. That is, it minimizes the sum of average delays of the SUs as well as satisfying the instantaneous interference and average delay constraints. Extensive simulation results showed that the proposed policy outperforms existing policies.

%We have studied the joint scheduling and power control problem of multi SUs in the presence of interference constraints to the PU, as well as an average delay constraint for each SU that needs to be met. We proposed a dynamic policy that schedules a SU each time slot based on the SUs' queues, channel gains, and interference channel gains. Moreover, the policy finds the power allocated to this user using water-filling-like expression. Based on these closed-form expressions for the power control and scheduling problem, the complexity of the proposed policy does not increase with the increase in the number of SUs. We showed, through the Lyapunov optimization, that the proposed policy is asymptotically delay optimal. That is, it minimizes the sum of the average delays of all SUs as well as satisfying the average interference and delay constraints.

%\appendices
%\section{Proof of Theorem \ref{Optimality}}

%----------------------------------------------------------------------------------

%--------------------------------      References      ----------------------------

%----------------------------------------------------------------------------------
\bibliographystyle{IEEEbib}
\bibliography{MyLib}

\end{document}